\documentclass[11pt,a4paper]{article}
\usepackage[T1]{fontenc}
\usepackage[utf8]{inputenc}
\usepackage{lmodern,microtype}
\usepackage[margin=25mm,headheight=14pt]{geometry}
\usepackage{amsmath,amssymb,amsthm,mathtools}
\usepackage{graphicx,booktabs,enumitem,placeins}
\usepackage{fancyhdr}
\usepackage{float}
\usepackage[hidelinks]{hyperref}
\setlist[enumerate]{leftmargin=2em,itemsep=2pt,topsep=4pt}
\allowdisplaybreaks[1]
\newtheorem{theorem}{Theorem}
\newtheorem{lemma}{Lemma}

\newtheorem*{conjecture}{Conjecture 4}
\theoremstyle{remark}
\newcommand{\R}{\mathbb R}
\newcommand{\X}{\mathcal X}
\newcommand{\Kinf}{\mathcal K_\infty}
\newcommand{\KL}{\mathcal{KL}}
\newcommand{\dd}{\,\mathrm d}
\newcommand{\norm}[1]{\lVert#1\rVert}
\newcommand{\D}{D^+}

\newcommand\blfootnote[1]{%
  \begingroup
  \renewcommand\thefootnote{}\footnote{#1}%
  \addtocounter{footnote}{-1}%
  \endgroup
}

\title{\vspace{-1.2em}\textbf{A Candidate Counterexample to a Conjecture on ISS for Time-Delay Systems}}
\author{Vittorio De Iuliis\thanks{Department for the Promotion of Human Science and Quality of Life, San Raffaele University of Rome, Italy. \texttt{vittorio.deiuliis@uniroma5.it}.}\qquad
Pierdomenico Pepe\thanks{Department of Information Engineering, Computer Science and Mathematics, University of L'Aquila, Italy. \texttt{pierdomenico.pepe@univaq.it}.}}
\date{}
\begin{document}
\maketitle\thispagestyle{plain}
\blfootnote{\textit{Preprint submitted to arXiv on September 18, 2026.}}
\vspace{-10pt}
\begin{abstract}
We present a candidate counterexample to a conjecture stating that the existence of a Lyapunov-Krasovskii functional with a pointwise dissipation rate is sufficient for the input-to-state stability of time-delay systems. 
The counterexample has been derived through interactions with large language models.
\end{abstract}
\vspace{5pt}
\noindent\textbf{Keywords:} input-to-state stability; retarded systems; time-delay systems; Lyapunov--Krasovskii functional; Driver derivative; pointwise dissipation.

\vspace{5pt}

\section{Introduction and the conjecture}
For retarded systems, a Lyapunov--Krasovskii functional (LKF) depends on a history segment, whereas a pointwise dissipation rate depends only on the current solution value.  Conjecture 4 in~\cite[Section~8.3, p.~297]{survey} states that pointwise dissipation guarantees ISS.

Here we provide a candidate counterexample to such Conjecture. In particular, we present a two-dimensional retarded functional differential equation, and a piecewise continuous bounded input signal, such that the solution is unbounded. Thus, the system is not ISS. On the other hand, we present a Lyapunov-Krasovskii functional which is Lipschitz on bounded sets, which exhibits a pointwise dissipation rate as in Conjecture 4 in~\cite[Section~8.3, p.~297]{survey}. 

Such counterexample was obtained through repeated interaction with the latest OpenAI's ChatGPT large language model (LLM) GPT-6 Astra. Initially, we attempted to work toward a proof of the conjecture. Having made no substantial progress, we tried asking for a counterexample to it. The model produced a counterexample based on the same underlying idea as the one presented here, but whose theoretical and numerical verification posed significant difficulties: the system dynamics were governed by exponential gain and damping terms, with state-dependent input activation times occurring over infinitesimal intervals (less than $10^{-58}$\textit{s}) that greatly complicated the numerical validation. Through repeated interactions, we succeeded in obtaining the counterexample reported here, which utilizes polynomial gain and damping terms in the system dynamics alongside input functions with state-independent activations.

We are intrigued and challenged by the depth of analysis demonstrated by frontier LLMs, which yielded this candidate counterexample, and by the broader consequences of these emerging capabilities for scientific discovery. Up to our best efforts in checking the technical validity of the arguments, the counterexample is true. The same conclusion has been confirmed by Anthropic's LLM Claude Fable 5.

\paragraph{Notation.}
Set $\X=C([-1,0];\R^2)$ and
$\norm{\phi}=\max_{-1\le\theta\le0}|\phi(\theta)|$, where $| \cdot |$ is the Euclidean norm in $\mathbb{R}^2$. For positive real $T$, $x:[-1,T)\to\R^2$, let $x_t(\theta)=x(t+\theta)$. We consider
\begin{equation} \label{eq:sistema}
 \dot x(t)=f(x_t,u(t)),\qquad x_0=\phi\in\X,
\end{equation}
with measurable, locally essentially bounded inputs $u : \mathbb{R}_{\geq 0} \to \mathbb{R}^2$, $f: \mathcal{X}\times \mathbb{R}^2 \to \mathbb{R}^2$ Lipschitz on bounded sets and $f(0,0)=0$. These hypotheses give existence, uniqueness, continuous dependence, and the boundedness implies continuation property~\cite[Theorem~2]{survey}. 

A continuous, strictly increasing map $\alpha:\R_{\ge0}\to\R_{\ge0}$ with $\alpha(0)=0$ belongs to $\mathcal K$; if it is also unbounded, it belongs to $\Kinf$. The class $\mathcal N$ consists of continuous nondecreasing maps vanishing at zero. A continuous function $\beta$: $\mathbb{R}_{\geq 0} \times \mathbb{R}_{\geq 0} \to \mathbb{R}_{\geq 0}$ belongs to $\KL$ if, for any $t \in \mathbb{R}_{\geq 0}$, $\beta(\cdot,t)$ is of class $\mathcal K$ and, for any $s \in \mathbb{R}_{\geq 0}$,  $\beta(s,\cdot)$ decreases to zero. An LKF is a functional $V:\X\to\R_{\ge0}$, Lipschitz on bounded sets, such that for some $\underline\alpha,\overline\alpha\in\Kinf$: 
\begin{equation}
 \underline\alpha(|\phi(0)|)\le V(\phi)\le\overline\alpha(\norm{\phi}), \quad \forall \phi \in \mathcal{X}.
 \label{eq:lkf}
\end{equation}

For $w\in\R^2$ and $0<h<1$, define
\begin{equation}
 \phi_{h,w}(\theta)=
 \begin{cases}
 \phi(\theta+h),&-1\le\theta<-h,\\
 \phi(0)+(\theta+h)w,&-h\le\theta\le0,
 \end{cases}
 \qquad
 \D V(\phi,w)=\limsup_{h\downarrow0}\frac{V(\phi_{h,w})-V(\phi)}h.
 \label{eq:driverdef}
\end{equation}

 System \eqref{eq:sistema} is ISS if there exist 
$\beta\in\KL$ and $\mu\in\mathcal N$ such that, for any initial state $\phi \in \mathcal{X}$ and any Lebesgue-measurable locally essentially bounded input $u : \mathbb{R}_{\geq 0} \to \mathbb{R}^2$, the solution of system \eqref{eq:sistema} exists for all $t \in \mathbb{R}_{\geq 0}$ and furthermore satisfies:
\begin{equation}
|x(t,\phi,u)|\le\beta(\norm{\phi},t)+\mu(\norm{u_{[0,t]}}_\infty).
 \label{eq:iss}
\end{equation}
\begin{conjecture}[{\cite[Section~8.3]{survey}}]
If an LKF satisfying~\eqref{eq:lkf} admits $\alpha\in\Kinf$ and $\gamma\in\mathcal N$ such that
\begin{equation}
 \D V(\phi,f(\phi,u))\le-\alpha(|\phi(0)|)+\gamma(|u|)
 \quad\text{for all }\phi\in\X,\ u\in\R^2,
 \label{eq:conjecture}
\end{equation}
then system \eqref{eq:sistema} is ISS.
\end{conjecture}

\section{Explicit functional, dynamics, and input}
Write $x= [ p \ \ q ]^\top $, $u= [ a \ \ b ]^\top $ and $\phi= [ \pi \ \ \zeta ]^\top $, with $p=\pi(0)$, $q=\zeta(0)$. For $\sigma\in\{+1,-1\}$, let $g_\sigma: \mathbb{R} \to \mathbb{R}$ be defined as:
\begin{equation} 
 g_\sigma(v)=\tfrac12\max\{\sigma v,0\}^2,
 \qquad
\end{equation}
so its derivative $g_\sigma': \mathbb{R} \to \mathbb{R}$ is given by
\begin{equation}
g_\sigma'(v)=\begin{cases}v,&\sigma v>0,\\0,&\sigma v\le0.\end{cases}
 \label{eq:g}
\end{equation}
Thus $g_\sigma\in C^1$, $g_\sigma'$ is globally $1$-Lipschitz, and
$\max_\sigma g_\sigma(v)=v^2/2$. Define
\begin{align}
 F_\sigma(\phi,\theta)
 &=g_\sigma(\pi(\theta))+\tfrac12q^2+g_\sigma(q)
 -g_\sigma(\zeta(\theta))-\int_\theta^0\pi(s)^2\dd s,
 \label{eq:F}\\
 V(\phi)&=\max_{\substack{\sigma\in\{+1,-1\}\\\theta\in[-1,0]}}
 F_\sigma(\phi,\theta),\qquad
 C(\phi)=\tfrac12(p^2+q^2),\qquad E(\phi)=V(\phi)-C(\phi).
 \label{eq:V}
\end{align}
The proposed time-invariant system is
\begin{equation}
 \boxed{\begin{aligned}
 \dot p(t)&=-p(t)-p(t)^5+16\big(1+V(x_t)\big)^3b(t)E(x_t),\\
 \dot q(t)&=-q(t)+a(t).
 \end{aligned}}
 \label{eq:system}
\end{equation}

Notice that the system is globally exponentially stable in the zero-input case.

For integers $n\ge0$, set
\begin{equation}
 \varepsilon_n=(-1)^n,\qquad
 \tau_n=\frac n4+\frac1{64}\sum_{j=1}^n\frac1j,\qquad
 \rho_n=\tau_n+\frac14,\qquad
 \delta_n=\frac1{64(n+1)}.
 \label{eq:times}
\end{equation}
The sum is zero for $n=0$, and $\tau_{n+1}=\rho_n+\delta_n$. The input is given by:
\begin{equation} 
 \boxed{u(t)= \begin{bmatrix} a(t) \\ b(t)\end{bmatrix} =  \begin{cases}  
 \begin{bmatrix}
 \varepsilon_n[28+32(t-\tau_n)] \\ 0 \end{bmatrix},  &\tau_n\le t<\rho_n,\\ \\ \begin{bmatrix} 4\varepsilon_n \\ -\varepsilon_n\end{bmatrix} ,&\rho_n\le t<\tau_{n+1}.
 \end{cases}}
 \label{eq:input}
\end{equation}

We call $[\tau_n,  \rho_n)$ the \emph{ramp} interval, and $[\rho_n, \tau_{n+1})$ the \emph{pulse} interval. Every amplitude and switching time is known before the solution is computed. 
For example, the first ramp interval is given by $[0, \frac{1}{4})$, while the first pulse interval is $[\frac{1}{4}, \frac{17}{64})$. 

\begin{theorem}[Candidate counterexample]
\label{thm:main}
The functional~\eqref{eq:V} and system~\eqref{eq:system} have the following properties.
\begin{enumerate}[label=\textup{(\roman*)}]
\item $V$ and $f$ are Lipschitz on bounded sets, $f(0,0)=0$, and
\begin{equation}
 \tfrac12|\phi(0)|^2\le V(\phi)\le\tfrac32\norm{\phi}^2.
 \label{eq:bounds}
\end{equation}
\item The system is robustly forward complete and, for every history and input value,
\begin{equation}
 \D V(\phi,f(\phi,[a \ \ b]^\top ))
 \le-p^2-\tfrac12q^2+a^2
 \le-\tfrac12|\phi(0)|^2+|[a \ \ b]^\top |^2.
 \label{eq:diss}
\end{equation}
\item Input~\eqref{eq:input} is bounded, piecewise continuous and right-continuous, with $\norm{u}_\infty=36$. From the constant history
\begin{equation}
 \phi(\theta)=[2 \ \ -4]^\top ,\qquad -1\le\theta\le0,
 \label{eq:initial}
\end{equation}
the solution satisfies, for every $n\ge0$,
\begin{equation}
 \varepsilon_np(\tau_n)>0,\qquad
 \sqrt{20+12n}\le|x(\tau_n)|\le\sqrt{20+16n}.
 \label{eq:peaks}
\end{equation}
\end{enumerate}
Consequently, the system satisfies the hypotheses of Conjecture~4 but is not ISS.
\end{theorem}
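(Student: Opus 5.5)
Part (i) is handled first by direct estimates on \eqref{eq:F}. For the lower bound, take $\theta=0$: then $F_\sigma(\phi,0)=g_\sigma(p)+\tfrac12q^2$, and maximizing over $\sigma$ gives $V(\phi)\ge\tfrac12(p^2+q^2)$; equivalently $E\ge0$. For the upper bound, each of $g_\sigma(\pi(\theta))$, $\tfrac12q^2$ and $g_\sigma(q)$ is at most $\tfrac12\norm{\phi}^2$, and the two subtracted terms are nonnegative, so $V\le\tfrac32\norm{\phi}^2$. For Lipschitz continuity on bounded sets, note that $\phi\mapsto F_\sigma(\phi,\theta)$ is Lipschitz on bounded sets uniformly in $\theta$, because $g_\sigma$ is $C^1$ with $1$-Lipschitz derivative and the integral term is locally Lipschitz; a maximum of such a uniform family inherits the same property. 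Then $f$ is Lipschitz on bounded sets because it is built from $V$, $E=V-C$, polynomials and the input by products and sums. Finally $f(0,0)=0$ because $V(0)=E(0)=0$.

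For part (ii), the plan is a Danskin-type computation of $\D V$ along $\phi_{h,w}$, with $w=f(\phi,u)=(w_p,w_q)$. Let $M(\phi)$ be the set of maximizers $(\sigma,\theta)$. The map $(\sigma,\theta)\mapsto F_\sigma(\phi,\theta)$ is continuous on a compact set and $\phi_{h,w}$ is a shift followed by an affine extension, so I expect
\[
\D V(\phi,w)\le\max\Bigl\{\max_{(\sigma,\theta)\in M,\ \theta\in[-1,0)}\bigl[q w_q+g_\sigma'(q)w_q-\pi(\theta)^2+\pi(0)^2\bigr]\cdot(\ldots),\ \text{terms for new maximizers near }\theta=0\Bigr\},
\]
and I will make this precise as follows. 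Shifting changes $\int_\theta^0\pi^2$ by $h(p^2-\pi(\theta)^2)$ to first order. The endpoint terms $g_\sigma(\pi(\theta))$ and $g_\sigma(\zeta(\theta))$ move with the shifted argument $\theta-h$, and since the maximum is also taken over $\theta$, they are absorbed. The current-value terms contribute $(q+g_\sigma'(q))w_q$. Maximizers in $[-h,0]$ are handled by direct expansion, which gives $g_\sigma'(p)w_p+(q+g'_\sigma(q))w_q$ when $\theta=0$.

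The dissipation argument then splits into cases. If an interior maximizer with $\theta<0$ is active, the $p$-channel does not appear, and I get $-\pi(\theta)^2+p^2+(q+g_\sigma'(q))(-q+a)\le -p^2-\tfrac12q^2+a^2$. This uses a maximizer-optimality inequality, namely that $g_\sigma(\pi(\theta))-\int_\theta^0\pi^2$ dominates its value at $\theta=0$, together with $2q(-q+a)\le -q^2+a^2$. If instead $\theta=0$ is the only maximizer, then $V=C$, i.e. $E=0$, so the input term $16(1+V)^3bE$ vanishes and $g_\sigma'(p)(-p-p^5)\le0$ gives the bound. This is exactly why $b$ enters multiplied by $E$, and verifying it is the key step. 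Robust forward completeness follows from \eqref{eq:diss} via a comparison argument on $V(x_t)$, using the bounds from (i) and the fact that the growth terms are locally bounded.

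Part (iii) is the main obstacle. The plan is induction on $n$ with the hypothesis \eqref{eq:peaks}, taking $n$ even by symmetry (replace $x$ by $-x$ and $\sigma$ by $-\sigma$). On the ramp $[\tau_n,\rho_n)$ we have $b=0$, so $p$ decays to a small value while $q$ is driven positive by the large ramp, $q\approx 7$. Meanwhile the history retains large positive $\pi$ from earlier, so the maximizer moves to $\theta<0$ and $E$ becomes of order $n$. On the pulse of length $\delta_n$, $b=-1$ and the forcing $-16(1+V)^3E$ dominates $-p^5$, so $p$ reaches $-\varepsilon_{n+1}$-signed magnitude with $p^2\gtrsim 12(n+1)+20-q^2$. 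The precise step is to bound $E$ from below on the pulse using the maximizer at $\theta\approx-1/4$, and to control $V$ above by \eqref{eq:diss} integrated, which gives $V\le 10+8n$. I would first try to show that $p$ follows the quasi-equilibrium of $\dot p=-p^5+cV^3E$ with $c$ large, so that $p^4\approx V$ scale. This yields $|x|^2$ growth of order $12$ to $16$ per cycle, and therefore unboundedness with $\norm{u}_\infty=36$. That contradicts \eqref{eq:iss}.
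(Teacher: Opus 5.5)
Part (i) is fine and matches the paper. The first genuine gap is in (ii), where your first-order variation of an ``old'' sample mixes two viewpoints. For the endpoint terms $g_\sigma(\pi(\cdot))$, $g_\sigma(\zeta(\cdot))$ to cancel without differentiating the history, you must follow the sample, i.e. compare $F_\sigma(\phi_{h,w},\xi-h)$ with $F_\sigma(\phi,\xi)$. But then the integral does not change by $h(p^2-\pi(\theta)^2)$. It gains exactly $\int_0^h(p+sr)^2\dd s$, so $F_\sigma(\phi_{h,w},\xi-h)=F_\sigma(\phi,\xi)+h[(q+g_\sigma'(q))z-p^2]+O(h^2)$ uniformly in $\xi$, with $w=[r\ \ z]^\top$ and $z=-q+a$. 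No $\pi(\theta)$ appears. The dissipation $-p^2$ comes from the fresh mass that the integral (which runs up to the current time) adds to \emph{every} old sample, not from an optimality condition. As written, your interior case needs $-\pi(\theta)^2+p^2\le-p^2$, i.e. $\pi(\theta)^2\ge2p^2$, and nothing guarantees this. A first-order condition at an interior maximizer would involve $\pi'$ and $\zeta'$, which need not exist. The correct computation gives the sharper bound $\D V\le-p^2+\max\{qz,2qz\}$ of Lemma~\ref{lem:driver}. Bounding all old samples uniformly, as the paper does, also spares you the continuity-of-maximizers argument that a Danskin approach needs.

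The second gap is that (iii) is only a heuristic, and one of its concrete claims is false. Integrating \eqref{eq:diss} does not give $V\le10+8n$: on the ramp $a^2$ reaches $36^2$, so \eqref{eq:diss} allows a gain of roughly $256$ per ramp. You need the sharper Lemma together with the explicit $q$. On the ramp $q=\varepsilon_n[-4+32(t-\tau_n)]$, which ends at $4\varepsilon_n$, not near $7$; on the pulse $q\equiv4\varepsilon_n$. Hence $\int\max\{q\dot q,2q\dot q\}=-8+16=8$ per ramp, and $V(x_t)$ is nonincreasing on pulses. This bound is needed for the upper estimate in \eqref{eq:peaks}. More importantly, it gives the pulse cost $\int_{\rho_n}^{\tau_{n+1}}p^2\dd t\le2V(x_{\rho_n})\delta_n\le9/16$; with the constant from \eqref{eq:diss} this cost is about $8$ and swallows the entire net gain per cycle. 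The growth is also not a quasi-equilibrium, and your scaling $p^4\approx V$ is off: the forcing acts until $E\approx0$, i.e. $p^2/2\approx V-8$.

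The missing idea is a tracked sample $(\sigma,\theta)=(\varepsilon_n,\tau_n-t)$. Set $H_n=\tfrac12p(\tau_n)^2$ and $I_n=\int_{\tau_n}^{\rho_n}p^2\dd t$. At $\rho_n$ this sample's value is exactly $R_n=H_n+16-I_n$, because $g_{\varepsilon_n}(q(\tau_n))=g_{\varepsilon_n}(-4\varepsilon_n)=0$ while the current $q=4\varepsilon_n$ contributes $\tfrac12q^2+g_{\varepsilon_n}(q)=16$. The ramp damping costs $I_n\le\pi/4$ (from $\tfrac{\mathrm d}{\mathrm dt}\arctan p^2=-2p^2$) and leaves $|p(\rho_n)|<1$ (from $p^{-4}$). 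During the pulse this sample keeps $V\ge R_n-9/16$. Hence $E\ge7/16$ and $\dot y\ge4R_n^3$ for $y=-\varepsilon_np\in[-1,\sqrt{2R_n-18}]$. This forces the crossing within $\delta_n$ and forbids return, giving $H_{n+1}\ge R_n-9>H_n+6$. It also gives the sign $\varepsilon_{n+1}$ at $\tau_{n+1}$, not $-\varepsilon_{n+1}$ as you wrote.
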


The proof will be given in the following Sections.

\section{Global Lyapunov estimates and forward completeness}
\subsection{Bounds and regularity}

For fixed \(i\in\{1,2\}\) and \(\theta\in[-1,0]\), we refer to the quantity
\[
F_i(\phi,\theta)
\]
as a \emph{branch} of the max functional
\[
V(\phi)
=
\max_{i\in\{1,2\}}
\max_{\theta\in[-1,0]}
F_i(\phi,\theta).
\]
So at $\theta=0$ we have the branches
\begin{equation}
C_\sigma(\phi):=F_\sigma(\phi,0)=g_\sigma(p)+\frac{q^2}{2}, 
 \end{equation}
from which it follows that
\begin{equation}\max_\sigma C_\sigma(\phi)=C(\phi).
\end{equation}

Hence $V\ge C\ge0$ and $E\ge0$. Dropping the two nonpositive terms in~\eqref{eq:F} proves the upper bound in~\eqref{eq:bounds}. The maximum exists by continuity in $\theta$ on a compact interval.

On a history ball of radius $R$, each of the four evaluation terms in~\eqref{eq:F} has difference at most $R\norm{\phi-\psi}$. The integral has difference at most $2R\norm{\phi-\psi}$. Uniformly in $\sigma,\theta$, this yields
\begin{equation}
 |V(\phi)-V(\psi)|\le6R\norm{\phi-\psi}
 \quad(\norm{\phi},\norm{\psi}\le R).
 \label{eq:Lip}
\end{equation}
The same property holds for $C$ and $E$. At the zero history, $V=C=E=0$, and thus $f(0,0)=0$.

\subsection{A uniform Driver estimate}
The next, sharper estimate is used both to obtain~\eqref{eq:diss} and to control the loss during a full prescribed pulse.
\begin{lemma}
\label{lem:driver}
For every history $\phi$ and input value $u=[a \ \ b]^\top$, setting $z=-q+a$ gives
\begin{equation}
 \boxed{\D V(\phi,f(\phi,u ))\le-p^2+\max\{qz,2qz\}.}
 \label{eq:sharp}
\end{equation}
\end{lemma}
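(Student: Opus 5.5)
The plan is to bound the difference quotient of $V$ branch by branch. $V$ is a maximum over the compact family $(\sigma,\theta)\in\{\pm1\}\times[-1,0]$ of the branches $F_\sigma(\cdot,\theta)$. So it suffices to show that, for fixed $\phi$ and $w=f(\phi,u)=[w_1\ \ z]^\top$ with $w_1=-p-p^5+16(1+V(\phi))^3bE(\phi)$, every branch of the shifted history satisfies
\[
F_\sigma(\phi_{h,w},\theta)\le V(\phi)+h\bigl(-p^2+\max\{qz,2qz\}\bigr)+o(h),
\]
with the $o(h)$ uniform in $(\sigma,\theta)$. Taking the maximum over $(\sigma,\theta)$ and then $\limsup_{h\downarrow0}$ of the quotient gives~\eqref{eq:sharp}. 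Throughout, $h$ is small and all constants depend only on $\norm{\phi}$ and $|w|$. We use that $g_\sigma\in C^1$ with $1$-Lipschitz derivative, so $g_\sigma(v+k)=g_\sigma(v)+kg_\sigma'(v)+O(k^2)$, and that $g_\sigma'(q)z\in\{0,qz\}$.

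\emph{Old branches, $\theta\in[-1,-h)$.} Here $\pi_h(\theta)=\pi(\theta+h)$, $\zeta_h(\theta)=\zeta(\theta+h)$ and the current value is $q+hz$. The integral splits as
\[
\int_\theta^0\pi_h^2=\int_{\theta+h}^0\pi^2+hp^2+O(h^2).
\]
Hence
\[
F_\sigma(\phi_{h,w},\theta)=F_\sigma(\phi,\theta+h)+h\bigl(qz+g_\sigma'(q)z\bigr)-hp^2+O(h^2).
\]
Since $F_\sigma(\phi,\theta+h)\le V(\phi)$ and $qz+g'_\sigma(q)z\le\max\{qz,2qz\}$, this gives the desired bound. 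The bad term $w_1$ never enters here.

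\emph{New branches, $\theta\in[-h,0]$, with $s=\theta+h\in[0,h]$.} Expanding every term to first order gives
\[
F_\sigma(\phi_{h,w},\theta)=C_\sigma(\phi)+s\,g_\sigma'(p)w_1+(h-s)g_\sigma'(q)z+hqz-(h-s)p^2+O(h^2).
\]
The terms $(h-s)g'_\sigma(q)z+hqz$ are at most $h\max\{qz,2qz\}$, because $g'_\sigma(q)z\in\{0,qz\}$ and $h-s\le h$. The remaining work is to handle $s\,g'_\sigma(p)w_1$, which may carry the destabilizing input term, and I split into three cases.
\begin{enumerate}
\item If $E(\phi)>0$: $C_\sigma(\phi)\le C(\phi)=V(\phi)-E(\phi)$, while all first-order terms are $O(h)$. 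So for $h$ small these branches lie strictly below $V(\phi)+h(\cdots)$ and are irrelevant.
\item If $E(\phi)=0$ and $\sigma p>0$: then $w_1=-p-p^5$ and $g'_\sigma(p)w_1=-p^2-p^6\le-p^2$. Hence $s\,g_\sigma'(p)w_1-(h-s)p^2\le-hp^2$, and $C_\sigma(\phi)\le V(\phi)$.
\item If $\sigma p\le0$: the $w_1$ term vanishes, and $C_\sigma(\phi)=\tfrac12q^2=C(\phi)-\tfrac12p^2\le V(\phi)-\tfrac12p^2$. The slack $-\tfrac12p^2$ absorbs the missing $-sp^2$, since $s\le h<\tfrac12$.
\end{enumerate}

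The main obstacle is exactly these new branches when $E(\phi)>0$ is small but positive, since the gap $E(\phi)$ is not uniform. However, the $\limsup$ is taken for fixed $\phi$, so the only requirement is that $h$ is small relative to $E(\phi)$, and uniformity is needed only in $(\sigma,\theta)$. I would write this case carefully: fix $\phi$, choose $h_0$ with $Kh_0<E(\phi)$, where $K$ bounds the first-order coefficients, and check that all $O(h^2)$ remainders are uniform in $\theta$ via the Lipschitz bounds on $g_\sigma$ and $g_\sigma'$ on a ball.
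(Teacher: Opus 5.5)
Your proposal is correct and follows essentially the same route as the paper. Both arguments split the window into old and new samples, use the exact shift identity for old samples and a first-order expansion for new ones (your $s$ is the paper's $\lambda h$), and observe that a new sample can matter only when $C_\sigma(\phi)=V(\phi)$, which forces $E(\phi)=0$ and removes the destabilizing $b$-term. The differences are cosmetic: the paper first proves $\D V(\phi,w)\le\max\bigl(\{B_{+1},B_{-1}\}\cup\{J_\sigma:\sigma\in\mathcal A\}\bigr)$ for arbitrary $w$ before substituting $w=f(\phi,u)$, and it discards $\sigma p<0$ as inactive rather than absorbing $-sp^2$ into your $\tfrac12p^2$ slack.
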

\begin{proof}
Fix $\phi$ and a finite frozen velocity $w=[ r \ \ z ]^\top$, and abbreviate $\phi_h=\phi_{h,w}$. Put
\begin{equation}
 B_\sigma=(q+g_\sigma'(q))z-p^2,\qquad
 J_\sigma=g_\sigma'(p)r+qz,\qquad
 \mathcal A=\{\sigma:C_\sigma(\phi)=V(\phi)\}.
 \label{eq:BJ}
\end{equation}
No derivative of the original history is assumed. We use the elementary uniform Taylor bound
\begin{equation}
 |g_\sigma(v+d)-g_\sigma(v)-g_\sigma'(v)d|\le d^2/2,
 \label{eq:taylor}
\end{equation}
which follows by integrating the $1$-Lipschitz derivative of $g_\sigma$.

\paragraph{Old samples (i.e. $V(\phi_h)$ obtained by a branch with $\theta \in [-1,-h]$).}
For every $\xi\in[-1+h,0]$, direct substitution gives the exact identity
\begin{align}
 F_\sigma(\phi_h,\xi-h)
 ={}&F_\sigma(\phi,\xi)+\tfrac12[(q+hz)^2-q^2]
       +g_\sigma(q+hz)-g_\sigma(q)\notag\\
 &-\int_0^h(p+sr)^2\dd s
 =F_\sigma(\phi,\xi)+hB_\sigma+\mathcal R^o_\sigma(h).
 \label{eq:old}
\end{align}
For $0<h\le1$, the remainder is independent of $\xi$ and satisfies
$|\mathcal R^o_\sigma(h)|\le h^2(z^2+|pr|+r^2/3)$.
The substitution $\theta=\xi-h$ is a bijection from $[-1+h,0]$ onto $[-1,-h]$. Thus
\begin{align}
 \max_{\theta\in[-1,-h]}F_\sigma(\phi_h,\theta)
 &=\max_{\xi\in[-1+h,0]}F_\sigma(\phi,\xi)+hB_\sigma+\mathcal R^o_\sigma(h)\notag\\
 &\le V(\phi)+hB_\sigma+\mathcal R^o_\sigma(h).
 \label{eq:oldmax}
\end{align}
This is a bound on \emph{all} old samples, not a claim that any prespecified sample attains the maximum. Samples lost at the left boundary can only remove competitors.

\paragraph{New samples (i.e. $V(\phi_h)$ obtained by a branch with $\theta \in [-h,0]$).}
Every $\theta\in[-h,0]$ has the form $(\lambda-1)h$, $0\le\lambda\le1$. Exactly,
\begin{align}
 F_\sigma(\phi_h,(\lambda-1)h)
 ={}&g_\sigma(p+\lambda hr)+\tfrac12(q+hz)^2
 +g_\sigma(q+hz)-g_\sigma(q+\lambda hz)\notag\\
 &-\int_{\lambda h}^h(p+sr)^2\dd s\notag\\
 ={}&C_\sigma(\phi)+h[\lambda J_\sigma+(1-\lambda)B_\sigma]
       +\mathcal R^n_\sigma(h,\lambda).
 \label{eq:new}
\end{align}
Applying~\eqref{eq:taylor} to the three $g_\sigma$ terms and evaluating the polynomial integral proves
\begin{equation}
 |\mathcal R^n_\sigma(h,\lambda)|
 \le h^2\left(\frac56r^2+\frac32z^2+|pr|\right),
 \quad 0<h\le1,\quad0\le\lambda\le1.
 \label{eq:remainder}
\end{equation}
The old and new intervals cover the full window. Their uniform estimates therefore yield
\begin{equation}
 \D V(\phi,w)\le
 \max\bigl(\{B_{+1},B_{-1}\}\cup\{J_\sigma:\sigma\in\mathcal A\}\bigr),
 \label{eq:drivermax}
\end{equation}
with the second set omitted if empty. The constants in the remainders depend on the fixed $\phi,w$, as permitted in~\eqref{eq:driverdef}.

Now, recalling that $u=[a \ \ b]^\top, $ set $w=f(\phi,u)$. If $\sigma\in\mathcal A$, then
$C_\sigma\le C\le V=C_\sigma$, so $E=0$ and $g_\sigma'(p)=p$ (also at $p=0$). Consequently
\[
 J_\sigma=-p^2-p^6+qz\le-p^2+qz.
\]
The two values $B_\sigma$ are $-p^2+qz$ and $-p^2+2qz$, equal when $q=0$. Substitution in~\eqref{eq:drivermax} proves~\eqref{eq:sharp}.
\end{proof}

The LKF bounds and $\D V\le a^2\le|u|^2$ satisfy~\cite[Thm.~5]{survey}, which gives forward completeness.

\section{Divergence under the prescribed bounded input}

In this section we prove that the state of the system is unbounded in $\mathbb{R}_{\geq 0}$ under the provided bounded input \eqref{eq:input}, which results in system \eqref{eq:sistema} not being ISS, thus proving that Conjecture 4 is false.

With~\eqref{eq:initial} and~\eqref{eq:input}, direct substitution gives
\begin{equation}
 q(t)=\begin{cases}
 \varepsilon_n[-4+32(t-\tau_n)],&\tau_n\le t\le\rho_n,\\
 4\varepsilon_n,&\rho_n\le t\le\tau_{n+1}.
 \end{cases}
 \label{eq:q}
\end{equation}
The values agree at every boundary because $\varepsilon_{n+1}=-\varepsilon_n$. During a pulse interval the already attained equilibrium $q=4\varepsilon_n$ is maintained by $a=4\varepsilon_n$; no state reset is imposed.
The ramp input has $28\le|a|<36$, $b=0$, and the pulse input has $|u|=\sqrt{17}$. Hence the essential supremum is exactly
$36$. The input is right-continuous as well as piecewise continuous: the pulse interval lengths may tend to zero, but no finite-time accumulation occurs.

Let $\nu(t)=V(x_t)$. The constant initial history is Lipschitz, so $\nu$ is locally absolutely continuous. Direct evaluation gives $F_{+1}(\phi,\theta)=10+4\theta$, $F_{-1}(\phi,\theta)=8+4\theta$, and hence $\nu(0)=10$. Lemma~\ref{lem:driver} implies
\begin{equation}
 \dot\nu(t)\le-p(t)^2+\max\{q(t)\dot q(t),2q(t)\dot q(t)\}
 \quad\text{almost everywhere}.
 \label{eq:along}
\end{equation}
On a ramp interval, $|q|$ decreases from $4$ to $0$ and then increases to $4$. Therefore
\begin{equation}
 \int_{\tau_n}^{\rho_n}\max\{q\dot q,2q\dot q\}\dd t
 =\int_{\tau_n}^{\tau_n+1/8}q\dot q\dd t
  +\int_{\tau_n+1/8}^{\rho_n}2q\dot q\dd t=-8+16=8.
 \label{eq:budget}
\end{equation}
On a pulse interval $\dot q=0$, so $\dot\nu\le-p^2$ and $\nu$ is nonincreasing. Induction over the phases, without any growth assumption, gives
\begin{equation}
 \nu(\tau_n)\le10+8n,\qquad
 \nu(\rho_n)\le18+8n\le18(n+1).
 \label{eq:upper}
\end{equation}
Using $p^2\le2\nu$, on the entire prescribed pulse interval we obtain
\begin{equation}
 \int_{\rho_n}^{\tau_{n+1}}p(t)^2\dd t
 \le36(n+1)\delta_n=\frac9{16}.
 \label{eq:pulsecost}
\end{equation}
Crucially, this estimate does not stop at a level crossing and does not presume that any level is reached.

We prove by induction that
\begin{equation}
 \varepsilon_np(\tau_n)>0,\qquad
 H_n:=\tfrac12p(\tau_n)^2\ge2+6n.
 \label{eq:induction}
\end{equation}
The claim holds initially. Suppose it holds at $n$. On the ramp interval $b=0$, so $\dot p=-p-p^5$, its nonzero sign is preserved, and
\begin{equation}
 \frac{\mathrm d}{\mathrm dt}\arctan(p^2)=-2p^2,\qquad
 I_n:=\int_{\tau_n}^{\rho_n}p(t)^2\dd t\le\frac\pi4<1.
 \label{eq:dampcost}
\end{equation}
Also $\frac{d}{dt}(p^{-4})=4(1+p^{-4})>4$. Integration over the ramp interval of length $1/4$ gives $|p(\rho_n)|<1$.

Track the sample at $\tau_n$ with sign index $\sigma=\varepsilon_n$. Its value at the end of the ramp interval is
\begin{equation}
 R_n:=F_{\varepsilon_n}(x_{\rho_n},\tau_n-\rho_n)
     =H_n+16-I_n.
 \label{eq:R}
\end{equation}
Indeed, the recorded $p$ has sign $\varepsilon_n$, the recorded $q=-4\varepsilon_n$ gives $g_{\varepsilon_n}(q)=0$, and the final $q=4\varepsilon_n$ contributes $q^2/2+g_{\varepsilon_n}(q)=16$. Thus
\begin{equation}
 R_n>17+6n\ge16,\qquad
 6(n+1)\le R_n\le V(x_{\rho_n})\le18(n+1).
 \label{eq:Rbounds}
\end{equation}
The tracked sample need not realize the maximum. Only its contribution as a lower bound for $V$ is used.

Write $R=R_n$ and $k=n+1$. The same sample remains in the delay window throughout the pulse interval, since
\[
 t-\tau_n\le\frac14+\delta_n\le\frac{17}{64}<1.
\]
With $q$ constant, its score is $S(t)=R-\int_{\rho_n}^tp(s)^2\dd s$. By~\eqref{eq:pulsecost},
\begin{equation}
 V(x_t)\ge S(t)\ge R-\frac9{16}
 \qquad(\rho_n\le t\le\tau_{n+1}).
 \label{eq:retained}
\end{equation}
Set
\begin{equation}
 y=-\varepsilon_n p,\qquad L_n=\sqrt{2R-18}>1.
 \label{eq:level}
\end{equation}
The ramp estimate gives $-1<y(\rho_n)<0$. Under the prescribed input $b=-\varepsilon_n$,
\begin{equation}
 \dot y=-y-y^5+16(1+V)^3E.
 \label{eq:y}
\end{equation}
At $y=-1$ its right-hand side is $2+16(1+V)^3E\ge2$, so the solution cannot cross below $-1$ during the pulse interval.

Whenever $-1\le y\le L_n$, we have $C=y^2/2+8\le R-1$. Together with~\eqref{eq:retained}, this gives
\begin{equation}
 E=V-C\ge\frac7{16},\qquad 1+V\ge R.
 \label{eq:gap}
\end{equation}
Since $R\ge16$ and $|y|\le\sqrt{2R}$,
\[
 y+y^5\le\sqrt{2R}(1+4R^2)\le8R^{5/2}\le2R^3.
\]
Consequently,
\begin{equation}
 \dot y\ge7R^3-2R^3\ge4R^3
 \quad\text{whenever }-1\le y\le L_n.
 \label{eq:speed}
\end{equation}
If $y$ did not reach $L_n$ before the pulse interval ended, integrating this inequality and using~\eqref{eq:Rbounds} would give
\begin{equation}
 y(\tau_{n+1})\ge-1+4R^3\delta_n
 =-1+\frac{R^3}{16k}
 \ge-1+\frac{27}{2}k^2
 >6\sqrt{k}\ge L_n,
 \label{eq:crossing}
\end{equation}
which is a contradiction. Here the strict inequality holds for every integer $k\ge1$, and $L_n\le\sqrt{2R}\le6\sqrt{k}$.

There is no input switch at this crossing: the pulse continues to its predetermined endpoint. At every interior time with $y=L_n$, estimate~\eqref{eq:speed} still holds. The field along the solution is continuous inside the constant-input phase. A first downward crossing would therefore have a nonpositive crossing derivative, contradicting~\eqref{eq:speed}. Thus, by continuity also at the endpoint,
\begin{equation}
 y(\tau_{n+1})\ge L_n>0.
 \label{eq:noreturn}
\end{equation}

Equation~\eqref{eq:noreturn} gives the sign needed at cycle $n+1$, since $\varepsilon_{n+1}=-\varepsilon_n$, and
\begin{equation}
 H_{n+1}\ge L_n^2/2=R_n-9=H_n+7-I_n>H_n+6.
 \label{eq:increment}
\end{equation}
This closes~\eqref{eq:induction}. Because $q(\tau_n)^2=16$, its lower estimate gives $|x(\tau_n)|^2\ge20+12n$. The upper estimate follows from $|x|^2\le2V$ and~\eqref{eq:upper}. Therefore~\eqref{eq:peaks} holds, while $\tau_n\ge n/4\to\infty$.

An ISS estimate~\eqref{eq:iss} for this fixed history and bounded input would imply
\[
 |x(t)|\le\beta(\sqrt{20},t)+\mu(36)
 \le\beta(\sqrt{20},0)+\mu(36)<\infty\qquad(t\ge0),
\]
contradicting~\eqref{eq:peaks}. All the conjecture's functions are explicit:
\begin{equation}
 \underline\alpha(s)=\tfrac12s^2,\qquad
 \overline\alpha(s)=\tfrac32s^2,\qquad
 \alpha(s)=\tfrac12s^2,\qquad\gamma(s)=s^2.
 \label{eq:classes}
\end{equation}

\section{Numerical simulation of the state evolution for the given input}

Fig.~\ref{fig:state} shows current-state components for system \eqref{eq:sistema} under the prescribed input \eqref{eq:input}, exhibiting unbounded growth and contradicting ISS. This provides a numerical validation of Theorem \ref{thm:main}.

\begin{figure}[H]
\centering\includegraphics[width=\linewidth]{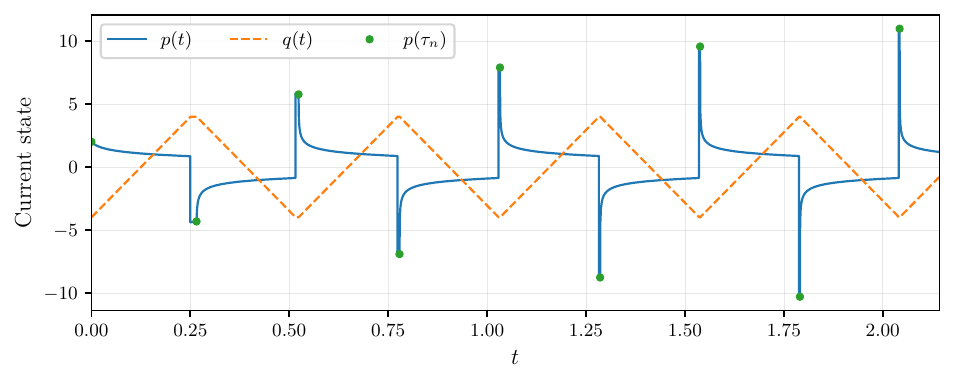}
\caption{Computed current state $x(t)=[ p(t) \ \ q(t)]^\top$ for  input \eqref{eq:input}. The state is continuous, even though steep pulse fronts appear almost vertical on this global time scale.}
\label{fig:state}
\end{figure}

\section{Conclusions}
We have exhibited a two-dimensional retarded functional differential equation and an LKF with pointwise dissipation rate which answer Conjecture~4 of \cite[Section~8.3]{survey} in a negative sense.

By using a piecewise continuous and right continuous input signal, the solution started from a suitable constant initial state is unbounded, thus violating ISS.  

\section*{Acknowledgments}
This work was supported in part by a grant of access to OpenAI models through the ChatGPT for Academic Researchers program. Access to Claude was provided by Anthropic through the Claude Team plan for scientists.


\begin{thebibliography}{9}
\bibitem{survey}
A.~Chaillet, I.~Karafyllis, P.~Pepe, and Y.~Wang,
``The ISS framework for time-delay systems: a survey,''
\emph{Mathematics of Control, Signals, and Systems}, vol.~35, pp.~237--306, 2023.
\href{https://doi.org/10.1007/s00498-023-00341-w}{doi:10.1007/s00498-023-00341-w}.
\end{thebibliography}
\end{document}